\documentclass[12pt]{article}
\usepackage{graphicx}
\DeclareGraphicsExtensions{.tif}
\DeclareGraphicsRule{.tif}{eps}{.tif.bb}{`tiff2ps -e  #1}


\usepackage[T1]{fontenc}
\usepackage[utf8]{inputenc}
\usepackage{authblk}

\usepackage[numbers]{natbib}
\usepackage[OT4]{fontenc}
\usepackage[utf8]{inputenc}

\usepackage[numbers]{natbib}
\usepackage[OT4]{fontenc}
\usepackage[utf8]{inputenc}



\setlength{\topmargin}{0cm}
\setlength{\headheight}{0cm}
\setlength{\headsep}{0cm}
\setlength{\footskip}{1cm}
\setlength{\belowdisplayskip}{3\baselineskip}
\setlength{\abovedisplayskip}{3\baselineskip}
\setlength{\abovedisplayshortskip}{3\baselineskip}
\setlength{\belowdisplayshortskip}{3\baselineskip}
%

\newcommand{\Exp}{{\rm I\hspace{-0.8mm}E}}
\newcommand{\Prob}{{\rm I\hspace{-0.8mm}P}}
\newcommand{\Var}{{\bf Var}}

\newcommand{\Corr}{{\bf Corr}}
\newcommand{\iz}{{\rm \rlap Z\kern 2.2pt Z}}

\newcommand{\ind}{{1\hspace{-1mm}{\rm I}}}

\newtheorem{theorem}{Theorem}

\newtheorem{proposition}{Proposition}


\mathsurround=2pt

\author[1]{Zbigniew Michna\footnote{Corresponding author\\ Email: zbigniew.michna@ue.wroc.pl\\ Tel/fax: +48 71 3680335}}
\author[2]{Peter Nielsen}
\author[2]{Izabela Ewa Nielsen}
\affil[1]{Department of Mathematics and Cybernetics

Wroc{\l}aw University of Economics}
\affil[2]{Department of Mechanical and Manufacturing Engineering Aalborg University}

\title{\bf\LARGE {The impact of stochastic lead times on the bullwhip effect}}
\date{}

\begin{document}

\maketitle

\bibliographystyle{abbrv}

\begin{abstract}
In this article we want to review the research state on the bullwhip effect in supply chains with stochastic lead times and give a contribution to quantifying the bullwhip effect. We analyze the mo-dels 
quantifying the bullwhip effect in supply chains with stochastic lead times and find advantages and disadvantages of their approaches to the bullwhip problem. Using real data we confirm that real lead times are stochastic and can be modeled by a sequence of independent identically distributed random variables. Moreover we modify a model where stochastic lead times and lead time demand forecasting are considered and give
an analytical expression for the bullwhip effect measure which indicates that
the distribution of a lead time and the delay parameter of the lead time demand prediction are the main
factors of the bullwhip phenomenon.
Moreover we analyze a recent paper of Michna and Nielsen \cite{mi:ni:13} adding simulation results.

\vspace{5mm}
{\it Keywords: supply chain, bullwhip effect,
inventory policy, stochastic lead time, demand forecasting, lead time forecasting, lead time demand forecasting}
\end{abstract}

\section{Introduction}
Supply chains are networks of firms (supply chain members) which act in order to deliver a product to the end consumer.
Supply chain members are concerned with optimizing their own objectives and this results in a poor
performance of the supply chain.  In other words local optimum policies of members do not result
in a global optimum of the chain and they yield the tendency of replenishment orders to increase in variability as one moves up stream in a supply chain. This effect was first recognized by Forrester \cite{fo:58} in the middle of the twentieth century and the term of bullwhip effect was coined by Procter \& Gamble management.
The bullwhip effect is considered harmful because of its consequences which are (see e.g. Buchmeister et al. \cite{bu:pa:pa:po:08}):
excessive inventory investment, poor customer service levels, lost revenue, reduced productivity, more difficult decision-making, sub-optimal transportation, sub-optimal production etc. This makes it critical to find the root causes of the bullwhip effect and to quantify the increase in order variability at each stage of the supply chain. In the current state of research several main causes of the bullwhip effect are considered (see e.g. Lee et al. \cite{le:pa:wh:97a} and \cite{le:pa:wh:97b}): demand forecasting, non-zero lead time, supply shortage, order batching, price fluctuation and lead time forecasting (see Michna and Nielsen \cite{mi:ni:13}). To decrease the variance amplification in a supply chain (i.e. to reduce the bullwhip effect) we need to identify all factors causing the bullwhip effect and to quantify their impact on the effect.

Many researchers have assumed a deterministic lead time and studied the influence of different methods of demand forecasting on the bullwhip effect such as simple moving average, exponential
smoothing, and minimum-mean-squared-error forecasts when demands are independent identically distributed or constitute integrated moving-average, autoregressive processes or autoregressive-moving averages (see Graves \cite{gr:99}, Lee et al. \cite{le:so:ta:00}, Chen et al. \cite{ch:dr:ry:si:00a} and \cite{ch:dr:ry:si:00b}, Alwan et al. \cite {al:li:ya:03}, Zhang \cite{zh:04} and Duc et al. \cite{du:lu:ki:08}). Moreover they quantify the impact of a deterministic lead time on the bullwhip effect and it follows from their work that the lead time  is one of the major factors influencing the size of the bullwhip effect in a given supply chain. 
Stochastic lead times were intensively investigated in inventory systems see Bagchi et al. \cite{ba:ha:ch:86},
Hariharan and Zipkin \cite{ha:zi:95}, Mohebbi and Posner \cite{mo:po:98}, Sarker and Zangwill \cite{sa:za:91}, Song \cite{so:94a} and \cite{so:94b}, Song and Zipkin
\cite{so:zi:93} and \cite{so:zi:96} and Zipkin \cite{zi:86}. Most of these works consider the so-called exogenous lead times that is they do not depend on the system e.g. the lead times are independent of the orders or the capacity utilization of a supplier. Moreover these articles studied how the variable lead times affect the control parameter, the inventory level or the costs.
One can investigate the so-called endogeneous lead times that depends on the system. This is analyzed in So and Zheng \cite{so:zh:03} showing the impact of endogeneous lead times on the amplification of the order variance and has been done by simulation. 
Recently the impact of stochastic lead times on the bullwhip effect is intensively investigated.
The main aim of this article is to review papers devoted to stochastic lead times in supply chains in the context of the bullwhip effect especially those which quantify the effect.  Moreover we modify a model where stochastic lead times and lead time demand forecasting are considered. 
In this model we find an analytical expression for the bullwhip effect measure which indicates that
the distribution of a lead time (the probability of the longest lead time and its expectation and variance) and the delay parameter of the lead time demand prediction are the main
factors of the bullwhip phenomenon.

In Tab. \ref{tabpapers} we collect all the main articles in which models on the bullwhip effect with stochastic lead times are provided (except the famous works of Chen et al. \cite{ch:dr:ry:si:00a} and 
\cite{ch:dr:ry:si:00b} where deterministic lead time is considered and some of them analyze the effect using simulation).

\begin{table}[!h]
\begin{center}
   \begin{tabular}{|c|c|c|c|}
     \hline
Article  & Demands & Lead times & Forrecasting  \\
\hline
Chen et al. \cite{ch:dr:ry:si:00a}&	AR(1)	 & deterministic & moving average \\
&  &  & of demands\\
\hline
Chen et al. \cite{ch:dr:ry:si:00b} & AR(1) & deterministic & expon. smoothing \\
&  &  & of demands\\
\hline
Chaharsooghi & deterministic & i.i.d. & --\\
and Heydari \cite{ch:he:10}&  &  & \\
\hline
Chatfield et al. \cite{ch:ki:ha:ha:04} & AR(1) & i.i.d. & moving average \\
&  &  & of lead time demands\\
\hline
Kim et al. \cite{ki:ch:ha:ha:06} & AR(1) & i.i.d. & moving average \\
& & & of lead time demands\\
\hline
Duc et al. \cite{du:lu:ki:08} & AR(1) & i.i.d. & the minimum-mean-\\
& ARMA(1,1) & & squared-error forecast\\
&  &  & of demands\\
\hline
Fioriolli et al. \cite{fi:fo:08} & AR(1) & i.i.d. & moving average\\
&  &  & of demands\\
\hline
Michna & i.i.d. & i.i.d. & moving average\\
and Nielsen \cite{mi:ni:13} &  &  &of demands \\
&  &  & and lead times\\
\hline
Reiner & dependent & i.i.d. & moving average \\
and Fichtinger \cite{re:fi:09} &  &  & of demands  \\
&  &  & and lead times\\
\hline
So & AR(1) & dependent &  the minimum-mean-\\
and Zheng \cite{so:zh:03} &  &  &squared-error forecast \\
&  &  & of demands\\
\hline
\end{tabular}
\end{center}
\caption{Articles on the impact of lead time on the bullwhip effect}\label{tabpapers}
\end{table}

The remainder of the paper is structured as follows. In the next section a discussion of the bullwhip effect is presented along with the common definition of it. The next section presents a brief study of real lead times in a supply chain, documenting their nature. The following section analyzes the current main models of supply chains with stochastic lead times, expanding and modifying some of the results. Finally conclusions and future research opportunities are presented. 

\section{Supply chains and the bullwhip effect}
In recent studies a supply chain is considered as a system of organizations, people, activities, information and resources involved in moving a product or service from suppliers to customers. More precisely a supply chain consists of customers, retailers, warehouses, distribution centers, manufactures, plants, raw material suppliers etc. They are members or stages (echelons) of a given supply chain. A supply chain has (or is assumed to have) a linear order which means that at the bottom there are customers,
above customers there is a retailer, above the retailer there is e.g. a manufacturer and so on.
The linear order is determined by the flow of products which stream down from the supplier through
the manufacturer, warehouse, retailer to the customers. The financial and information flows can 
accompany the flow of products. The simplest supply chain can consist of customers (customers are not regarded as a stage), a retailer and a manufacturer (a supplier). At every stage (except customers) a member of a supply chain possesses a storehouse and uses a certain stock policy (a replenishment policy) in its inventory control to fulfill its customer (a member of the supply chain which is right below) orders in a timely manner.
Commonly used replenishment policies are: the periodic review,  the replenishment interval, the order-up-to level inventory policy (out policy), $(s,S)$ policy, the continuous review, the reorder point (see e.g. Zipkin \cite{zi:00}).
A member of a supply chain observes demands from a stage below and lead times from a stage above.
Based on the previous demands and previous lead times and using a certain stock policy each member of a chain places
an order to its supplier. Thus at every stage one can observe demands from the stage below and replenishment orders sent to the stage above. The phenomenon of the variance amplification in replenishment orders if one moves up in
a supply chain is called bullwhip effect (see Disney and Towill \cite{di:to:03} and Geary et al. \cite{ge:di:to:06} for the definition and historical review). Manson et al. \cite{mu:hu:ro:03} assert: ''When each member of a group tries to maximize his or her benefit without regard to the impact on other members of the group, the overall effectiveness may suffer''.  The bullwhip effect is the key example of a supply chain inefficiency. 

The main measure of the bullwhip effect is the ratio of variances, that is if $q$ is a random variable describing demands (orders) of a member of the supply chain to a member above and $D$ is a random variable responsible for demands of the member below (e.g. $q$ describes orders of a retailer to a manufacturer (supplier) and $D$ shows
customer demands to the retailer) then the measure of performance of the bullwhip effect is the following
$$
BM=\frac{\Var(\mbox{orders})/\Exp(\mbox{orders})}{\Var(\mbox{demands})/\Exp(\mbox{demands})}=\frac{\Var q/\Exp q}{\Var D/\Exp D}\,.
$$ 
Usually in most models $\Exp D= \Exp q$. The value of $BM$ is greater than one in the presence of the bullwhip effect in a supply chain. If $BM$ is equal to one then there is no variance amplification whereas $BM$ smaller than one indicates dampening which means that the orders are smoothed compared to the demands indicating a push rather than pull supply chain. Another very important parameter of the supply chain performance is the measure of the net stock amplification of a given supply chain member. More precisely let $N_S$ be the level of the net stock of a supply chain member (e.g.
a retailer or a supplier) and $D$ be demands observed from its downstream member (customers or
a retailer) then the following measure 
$$
NSM=\frac{\Var({\mbox{net stock}})}{\Var({\mbox{demands}})}=\frac{\Var({N_S})}{\Var{D}}
$$ 
is also considered as a critical performance measure. In many models it is assumed that
the costs are proportional to $\sqrt{\Var(\mbox{orders})}$ and $\sqrt{\Var({N_S})}$.

\section{Establishing real lead time behavior}
Despite lead times widely being considered as one of the main causes of the bullwhip effect, limited literature exists investigating actual lead time behavior (see Tab. \ref{tabpapers}). Most research to date is focused on lead time demand, with an assumption of either constant lead times or lead times that are independent identically distributed  (i.i.d.). To support the assumptions used in references (see Chatfield et al. \cite{ch:ki:ha:ha:04}, Kim et al. \cite{ki:ch:ha:ha:06},  Duc et al. \cite{du:lu:ki:08} and Michna and Nielsen \cite{mi:ni:13}) – i.e. that lead times are i.i.d. - the lead time behavior from a manufacturing company is analyzed as an example in the following. The data used is 6,967 orders for one product varying in quantity ordered over a period of two years (481 work days) in a manufacturing company. On average 14.5 orders are received per day in the period, each order is to an individual customer in the same geographical region. To test whether or not lead times are in fact i.i.d. the following tests are employed:
\begin{enumerate}
\item[1)] autocorrelation (see e.g. Box and Jenkins \cite{bo:je:76}) for independence of the lead times. This is done on the average lead time per day as the individual orders cannot be ordered in time periods smaller than one day. 
\item[2)] Kolmogorov-Smirnov test (see e.g. Conover \cite{co:71}) is applied. The test is a widely used robust estimator for identical distributions \cite{co:71} . The method (as seen in Fig. \ref{test_fig}) relies on comparing samples of lead times and using Kolmogorov-Smirnov test to determine whether not these pairwise samples are identical. In this research a 0.05 significance level is used and the ratio of pairwise comparisons that pass this significance test is the output from the analysis. Different sample sizes are used to determine if the lead times can be assumed to be similarly distributed in smaller time periods, and thus if it is fair to sample previous lead time observations to estimate lead time distribution for planning purposes.   
\end{enumerate}
For a detailed account of the method please refer to Nielsen et al. \cite{ni:14}.
\begin{figure}[!h]
\begin{center}
\includegraphics[width=13cm]{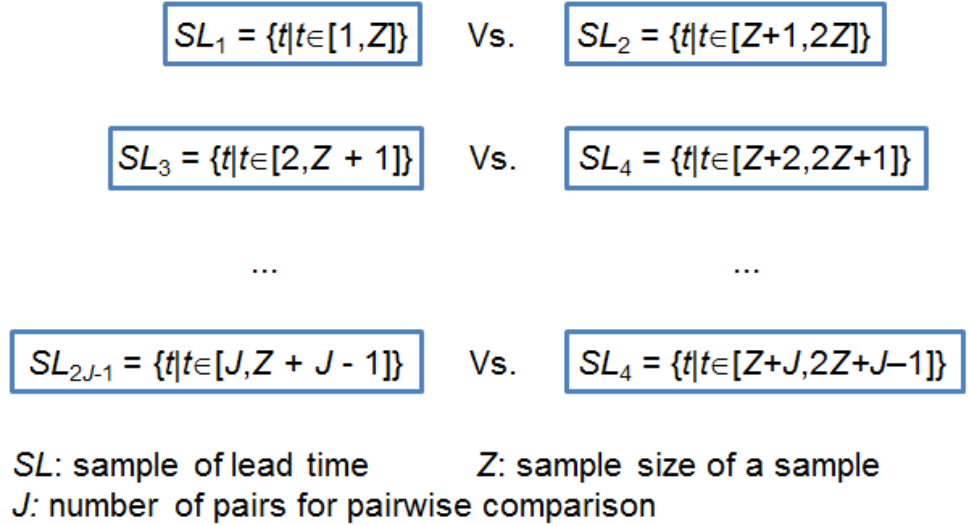}
\caption{Sample and comparison procedure using Kolmogorov-Smirnov test for pairwise comparisons.}\label{test_fig}
\end{center}
\end{figure}

\begin{figure}[!h]
\begin{center}
\includegraphics[height=12cm,width=13cm]{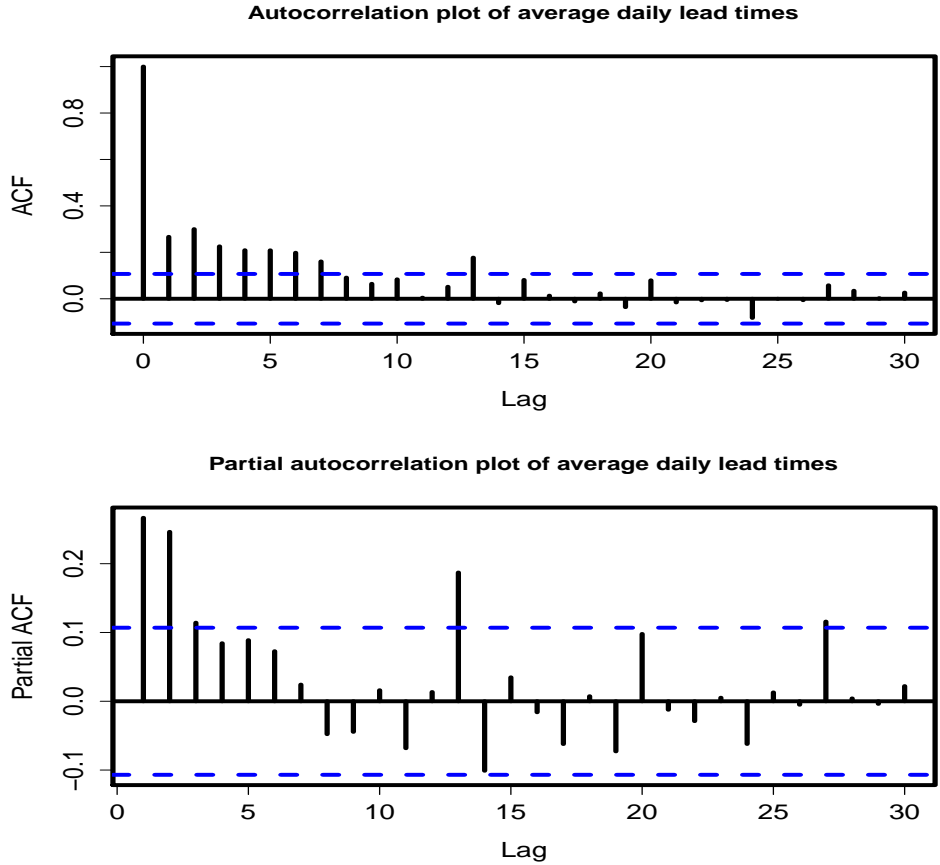}
\caption{top: auto correlation plot of average lead time per day; bottom: partial auto correlation plot of average lead time per day.}\label{iidts}
\end{center}
\end{figure}
An autocorrelation and partial autocorrelation plot are found in Fig. \ref{iidts}
which clearly show that the average lead times per day for all practical purposes can be considered mutually independent. There may be some minor indications from the average lead time on a given day slightly depend on recent average lead times. However, the correlation coefficients are small (that is approximately 0.1) and the penalty for assuming independence seems slight. 

\begin{figure}[!h]
\begin{center}
\includegraphics[height=12cm,width=13cm]{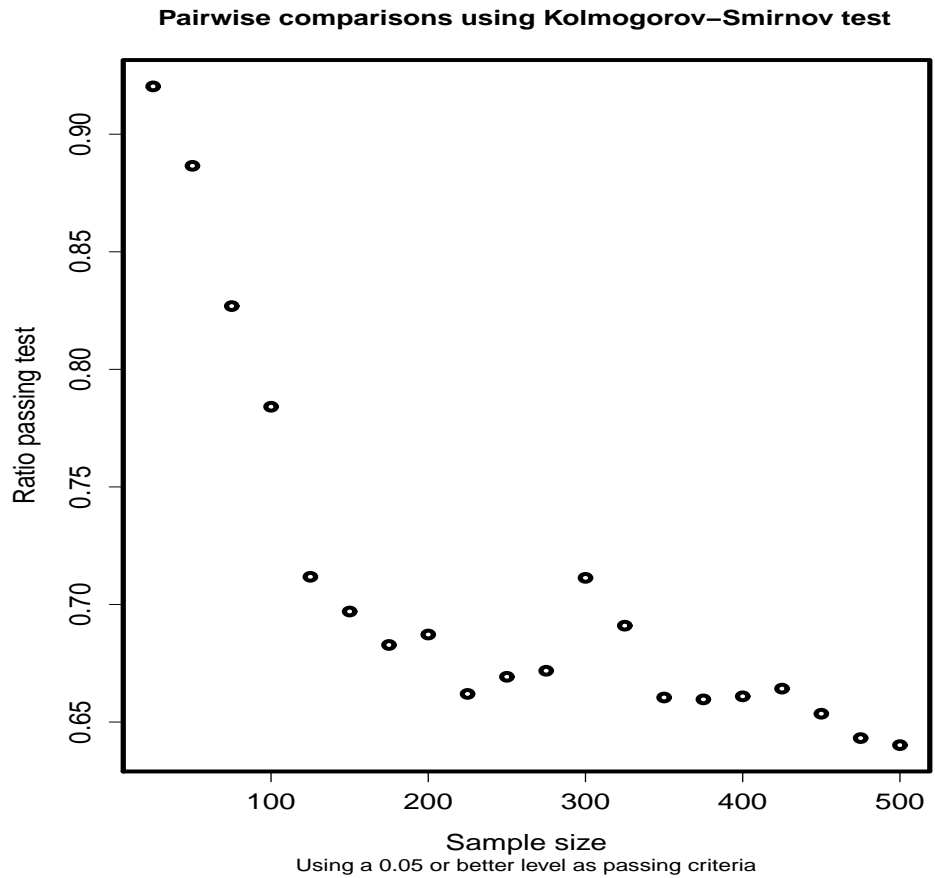}
\caption{Sample and comparison procedure using Kolmogorov-Smirnov test for pairwise comparisons.}\label{fig_ks_samples}
\end{center}
\end{figure}

Fig. \ref{fig_ks_samples} shows that even for large samples (500 vs. 500 observations) most of the comparisons are found to be statistically similar on a 0.05 or better level. This supports the assumptions that the lead times are in fact identically distributed. 

The overall conclusion is that it is not wrong to assume that lead times are in fact i.i.d. The investigation also underlines that it is a grave oversimplification to assume that lead times are constant for individual orders. There is also no guarantee that lead times are in fact i.i.d. in any and all context.

\section{Models with stochastic lead times}
Having established that at least in some cases lead times can be considered to be i.i.d. the next step is to analyze the current state of research into supply chains where lead times are assumed to be stochastic. The lead time is typically regarded as the second main cause of the bullwhip effect after demand forecasting.
Lead times are made of two components which are the physical delays and the information delays.
In models one does not distinguish these components as the lead time is the time between an order is placed by a member of a supply chain and the epoch when the product is delivered to the member.
The assumption that the lead time is constant is rather unrealistic. Undoubtedly in many supply chains physical and information delays are random which means that a member of a supply chain does not know values of the future lead times and in the past he observed that their values were varied in a stochastic manner. For instance in the paper of So and Zheng \cite{so:zh:03} the model of a supply chain with stochastic lead times is motivated by the semiconductor industry where the dramatic boom-and-bust cycles cause the delivery lead times to be highly variable, ranging from several weeks during the low demand season to over several months during the high demand season. 
Moreover in the models investigating the bullwhip effect one can decide how time
is represented. There are two choices that is discrete or continuous time. 
We analyze stochastic techniques which use discrete time. We assume that the observations are made at integer moments of time which means that time is represented in units of the review periods and nothing  is known about the system in the time between observations.

The main difference in models with stochastic lead times lies in the definition of the lead time demand forecast. Let us recall that the lead time demand at the beginning of a period $t$ (at a certain stage of the supply chain) is defined as follows
\begin{equation}\label{ltd}
D_t^L=D_t+D_{t+1}+\ldots+D_{t+L_t-1}=
\sum_{i=0}^{L_t-1}D_{t+i}
\end{equation}
where $D_t, D_{t+1}, \ldots$ denote demands (from a stage below) during $t, t+1,\ldots$ periods and $L_t$ is the lead time
of the order placed at the beginning of the period $t$ (order placed to a stage above).
This value sets down the demand during a lead time. The demands come from the stage right below and the lead times come from the supplier right above that is they are delivery lead times of the supplier which is right above the receiving supply chain member. This quantity is necessary to place an
order. Practically the member of the supply chain does not know its value at the moment $t$ but he needs to predict its value to place an order. Thus if we want to analyze the bullwhip effect we need to look closer at the definitions of lead time demand forecasting $\widehat{D}_t^L$. The approaches to this problem vary greatly in models with stochastic lead times
and some of them cannot be feasible in practice.
Many articles on the bullwhip effect investigate different methods of demand forecasting under the assumption that lead times are constant. The problem
of the lead time demand prediction is much more complicated if lead times are stochastic. Then mere demand forecasting is not sufficient to place an order. 

We will analyze the works which quantify the bullwhip effect in supply chains with stochastic lead times.
In all the presented models we will consider a simple two stage supply chain consisting of customers, a retailer and
a manufacturer. Moreover will assume that the retailer uses the order-up-to-level policy (which is optimal in the sense that it minimizes the total discounted linear holding and backorder costs) then
the level of the inventory at time $t$ has to be
\begin{equation}\label{st}
S_t=\widehat{D}_t^L+z\widehat{\sigma}_t\,,
\end{equation}
where $\widehat{D}_t^L$ is the lead time demand forecast at the beginning of the period $t$
(that is the prediction of the quantity given in (\ref{ltd})) and
\begin{equation}\label{sigmadl}
\widehat{\sigma}_t^2=\Var(D_t^L-\widehat{D}_t^L)
\end{equation}
is the variance of the forecast error for the lead time demand
and $z$ is the normal z-score that specifies the probability that demand is fulfilled by the on-hand inventory and it can be found based on a given service level. Usually $z=\Phi(p/(p+h))$ where $\Phi$
is the standard normal cumulative distribution function and $h$ and $p$ are the unit inventory holding
and backorder costs at the retailer, respectively. Moreover we need to notice that the definition of  $\widehat{\sigma}_t^2$ differs in articles
(see e.g. Chen et al. \cite{ch:dr:ry:si:00a}, Duc et al. \cite{du:lu:ki:08} or Kim et al. \cite{ki:ch:ha:ha:06}) which results in slightly different formulas of the bullwhip effect measure (e.g. equality instead of inequality). Practically instead of variance we have to put the empirical variance of  $D_t^L-\widehat{D}_t^L$. This complicates theoretical calculations very much but we must mention that the estimation of  $\widehat{\sigma_t}^2$ increases the size of the bullwhip effect.
Thus under above assumptions the order quantity $q_t$ placed by the retailer
at the beginning of a period $t$ is
\begin{equation}\label{qt}
q_t=S_t-S_{t-1}+D_{t-1}\,.
\end{equation}
Let us notice that negative values of $q_t$ are allowed which correspond to returns.

\subsection{Lead time demand forecasting using moving average.}\label{kimchat}
Let us analyze the work of Kim et al. \cite{ki:ch:ha:ha:06} (see also Chatfield \cite{ch:ki:ha:ha:04}
for a simulation approach). In their approach lead time demand forecasting is defined as follows
$$
\widehat{D}_t^L=\frac{1}{n}\sum_{j=1}^{n}D_{t-j}^L\,,
$$ 
where $n$ is the delay parameter of the prediction and $D_{t-j}^L$ is the previous known lead time demand of the order placed at the beginning of the time $t-j$. This method is practically feasible. 
The problem of the approach of Kim et al. \cite{ki:ch:ha:ha:06} lies in an impractical definition of the past lead time demands $D_{t-j}^L$. Namely they continue
$$
\widehat{D}_t^L=\frac{1}{n}\sum_{j=1}^{n}D_{t-j}^L=
\frac{1}{n}\sum_{j=1}^{n}\sum_{i=0}^{L-1} D_{t-j+i}=
\frac{1}{n}\sum_{i=0}^{L-1}\sum_{j=1}^n D_{t-j+i}\,,
$$
where $L$ is a lead time. Firstly if we assume that lead times are stochastic then with every lead time
demand $D_{t-j}^L$ we associate a different lead time $L_{t-j}$. Moreover this definition does not
work in the case of a deterministic lead time because at the beginning of the moment $t$ the values of demands $D_{t-j+i}$ if $j\leq i$ are not known (they explain it is a ''mirror image'' and ''equivalent in terms of a priori statistical analysis'' ).

Let us analyze the bullwhip effect under above setting but with small modifications (see also Michna et al. \cite{mi:ni:ni:13}). More precisely let us consider the simplest supply chain that consists of customers, a retailer and a supplier. We assume that the customer demands constitute an iid sequence $\{D_t\}_{t=-\infty}^\infty$. Moreover lead times are deterministic and equal to $L$ where $L$
is a positive integer that is $L=1,2,\ldots$. It is assumed that the retailer's replenishment order policy is  
the order-up-to-level policy and his lead time demand forecasting is based on the moving average method. 
Thus the forecast of the lead time demand at the beginning of the period $t$ based on the moving average method is as follows
\begin{equation}\label{ltdfma}
\widehat{D}_t^L=\frac{1}{n}\sum_{i=0}^{n-1} D^L_{t-L-i}\,.
\end{equation} 
Let us notice that we have to get back with lead time demands at least to the period $t-L$ because
we know demands till the epoch $t-1$. 
Moreover let us recall that the demand forecast alone using the moving average is as follows
$$
\widehat{D}_t=\frac{1}{n}\sum_{j=1}^n D_{t-j}\,.
$$
Thus substituting into eq. (\ref{ltdfma}) the known values of the previous lead time demands we get
\begin{eqnarray}
\widehat{D}_t^L&=&\frac{1}{n}\sum_{i=0}^{n-1}\sum_{j=0}^{L-1}D_{t-L-i+j}\nonumber\\
&=&\sum_{j=0}^{L-1}\frac{1}{n}\sum_{i=0}^{n-1}D_{t-L-i+j}\nonumber\\
&=&\sum_{j=0}^{L-1}\widehat{D}_{t-L+j+1}\nonumber\\
&=&\sum_{j=0}^{L-1}\widehat{D}_{t-j}\,.\label{ltdfkim}
\end{eqnarray}
Applying the order-up-to-level policy we get that the inventory level of the retailer at time $t$ is given in (\ref{st}).
The error of the lead time demand forecast$\widehat{\sigma}_t$ is defined as in eq. (\ref{sigmadl}).
It is easy to notice that under above assumptions $\widehat{\sigma}_t$ is independent of $t$.
Thus the order quantity $q_t$ placed by the retailer
at the beginning of a period $t$ is
\begin{eqnarray*}
q_t&=&S_t-S_{t-1}+D_{t-1}\\
&=&\widehat{D}_t^L-\widehat{D}_{t-1}^L+D_{t-1}\\
&=&\sum_{j=0}^{L-1}\widehat{D}_{t-j}-\sum_{j=0}^{L-1}\widehat{D}_{t-1-j}+D_{t-1}\\
&=& \widehat{D}_t-\widehat{D}_{t-L}+D_{t-1}
\end{eqnarray*}
where in the second last equality we use eq. (\ref{ltdfkim}) .

To calculate the value of $q_t$  we need to consider two cases that is $L\geq n$ and $L<n$.
Thus in the case $L\geq n$ the order $q_t$ placed by the retailer is as follows
$$
q_t=\left(\frac{1}{n}+1\right)D_{t-1}+\frac{1}{n}\sum_{j=2}^n D_{t-j}
-\frac{1}{n}\sum_{j=1}^n D_{t-L-j}
$$
and
\begin{eqnarray*}
\Var q_t&=&\left[\left(\frac{1}{n}+1\right)^2+\frac{2n-1}{n^2}\right]\Var D\\
&=&\left(1+\frac{4}{n}\right)\Var D\,.
\end{eqnarray*}
In the case $L<n$ we get
$$
q_t=\left(\frac{1}{n}+1\right)D_{t-1}+\frac{1}{n}\sum_{j=2}^L D_{t-j}
-\frac{1}{n}\sum_{j=1}^L D_{t-p-j}
$$
and
\begin{eqnarray*}
\Var q_t&=&\left[\left(\frac{1}{n}+1\right)^2+\frac{2L-1}{n^2}\right]\Var D\\
&=&\left(1+\frac{2}{n}+\frac{2L}{n^2}\right)\Var D\,.
\end{eqnarray*}
\begin{proposition}\label{ldet}
If the lead times are deterministic and positive integer valued that is $L=1,2,\ldots$ and lead time
demands are forecasted using the moving average method then bullwhip effect measure is
$$
BM=\frac{\Var q_t}{\Var D}=
\left\{
\begin{array}{ll}
1+\frac{2}{n}+\frac{2L}{n^2}& \mbox{if}\,\,\, L<n\\
1+\frac{4}{n}& \mbox{if}\,\,\, L\geq n\,.
\end{array}
\right.
$$
\end{proposition}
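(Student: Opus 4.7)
My plan is to start from the expression
$$
q_t = \widehat{D}_t - \widehat{D}_{t-L} + D_{t-1}
$$
already obtained (via eq.~(\ref{ltdfkim})) and then expand the two moving-average predictors $\widehat{D}_t = \frac{1}{n}\sum_{j=1}^{n} D_{t-j}$ and $\widehat{D}_{t-L} = \frac{1}{n}\sum_{j=1}^{n} D_{t-L-j}$ in terms of the i.i.d.\ demands. Because the $D_i$ are uncorrelated with common variance $\Var D$, the computation of $\Var q_t$ reduces to collecting the coefficient of each $D_i$ in $q_t$, squaring it, and summing. The whole exercise is therefore bookkeeping on index sets, and the case split $L \geq n$ vs.\ $L < n$ comes entirely from whether the summation windows $\{t-1,\ldots,t-n\}$ and $\{t-L-1,\ldots,t-L-n\}$ overlap.

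In the case $L \geq n$ the two windows are disjoint and also the window for $\widehat{D}_{t-L}$ is disjoint from the extra $D_{t-1}$. Thus every $D_{t-j}$ with $j \in \{2,\ldots,n\}$ enters $q_t$ with coefficient $1/n$, every $D_{t-L-j}$ with $j \in \{1,\ldots,n\}$ enters with coefficient $-1/n$, and $D_{t-1}$ enters with coefficient $1 + 1/n$. Summing the squared coefficients gives
$(1+1/n)^2 + (n-1)/n^2 + n/n^2 = 1 + 4/n$, which matches the desired formula. The case $L \geq n$ is thus immediate.

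In the case $L < n$ the two windows overlap in the indices $\{t-L-1, \ldots, t-n\}$, and on this overlap the contributions to $\widehat{D}_t - \widehat{D}_{t-L}$ cancel. What remains is
$$
\widehat{D}_t - \widehat{D}_{t-L} = \frac{1}{n}\sum_{j=1}^{L} D_{t-j} - \frac{1}{n}\sum_{j=1}^{L} D_{t-n-j},
$$
so that after absorbing $D_{t-1}$ the nonzero coefficients in $q_t$ are $1+1/n$ on $D_{t-1}$, $1/n$ on each of $D_{t-2},\ldots,D_{t-L}$, and $-1/n$ on each of $D_{t-n-1},\ldots,D_{t-n-L}$. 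Summing squared coefficients yields $(1+1/n)^2 + (L-1)/n^2 + L/n^2 = 1 + 2/n + 2L/n^2$.

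The main potential pitfall is purely notational: keeping the two index windows straight and correctly identifying the overlap in the $L<n$ case (the displayed formula in the excerpt has a typo where $p$ appears in place of $L$, so I would want to be careful to carry the right lag through). Beyond that the argument is just linearity of expectation together with pairwise independence of the $D_i$, so no deeper ingredient is needed.
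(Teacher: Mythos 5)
Your proposal is correct and follows essentially the same route as the paper: starting from $q_t=\widehat{D}_t-\widehat{D}_{t-L}+D_{t-1}$, splitting on whether the two moving-average windows $\{t-1,\ldots,t-n\}$ and $\{t-L-1,\ldots,t-L-n\}$ overlap, and summing the squared coefficients of the i.i.d.\ demands. (One small note: the paper's misprint $D_{t-p-j}$ should read $D_{t-n-j}$, i.e.\ $p$ stands in for $n$ rather than for $L$, which is exactly the index your computation uses.)
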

In Fig. \ref{det} we plotted the bulwhip effect measure for a deterministic lead time when 
lead time demands are predicted by the moving average method (see Prop. \ref{ldet}). 
Let us notice that the bullwhip effect function $BM(n)$ as a function of $n$ does not have any jump at
$L$ that is it smoothly gets across the point $n=L$ (compare Prop. \ref{ldet} with the similar result of Kim et al. \cite{ki:ch:ha:ha:06}).
\begin{figure}[!h]
\begin{center}
\includegraphics[height=10cm,width=10cm]{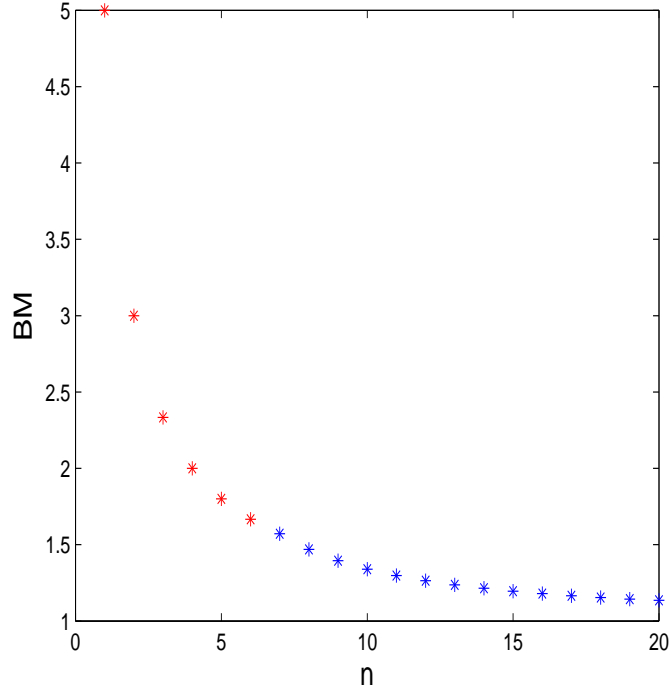}
\caption{The plot of the bullwhip effect measure as a function of $n$ where $L=7$.}\label{det}
\end{center}
\end{figure}

Now we follow the work of Kim et al. \cite{ki:ch:ha:ha:06} with certain modifications to find the bullwhip effect measure in the presence of stochastic lead times (see also Michna et al. \cite{mi:ni:ni:13}). We assume that the customers demands constitute an iid sequence $\{D_t\}_{t=-\infty}^\infty$ and the lead times $\{L_t\}_{t=-\infty}^\infty$
are also independent and identically distributed and the sequences are mutually independent.
Let us put $\Exp D_t=\mu_D$, $\Var D_t=\sigma^2_D$, $\Exp L_t=\mu_L$ and 
$\Var L_t=\sigma^2_L$.
Additionally we need to assume that lead times are bounded random variables that is $L_i\leq M$
where $M$ is a positive integer. This assumption is not adopted in Kim et al. \cite{ki:ch:ha:ha:06} which makes their results slightly impractical because it is necessary to make the prediction of lead time demands.
More precisely we get back at least $M$ periods to forecast lead time demand that is at time $t$ we know lead time demands of times $t-M, t-M-1, \ldots$ and we may not know lead time demands of times $t-M+1, t-M+2, \ldots$.
As we see later we will need to know the distribution of $L_t$ to calculate the bullwhip effect measure that is 
we assume that 
$$
\Prob(L_t=k)=p_k
$$
where $k=1,2,\ldots,M$ and $k$ is the number of periods (in practice we estimate these probabilities).
Thus the prediction of the lead time demand at time $t$ using the method of moving average with the length $n$ is as follows
\begin{equation}\label{ltdp}
\widehat{D}_t^L=\frac{1}{n}\sum_{j=0}^{n-1} D_{t-M-j}^L\,.
\end{equation}
Once again let us notice that the lead time demands $D_{t-M+1}^L, D_{t-M+2}^L, \ldots$
we may not know at time $t$ that is why in the lead time demand forecasting we engage $D_{t-M}^L, D_{t-M-1}^L, \ldots$ which are the lead time demands up to time $t-M$. Thus by eq. (\ref{ltd}) we get
\begin{equation}\label{ltdpc}
\widehat{D}_t^L=\frac{1}{n}\sum_{j=0}^{n-1} \sum_{i=0}^{L_{t-M-j}-1}D_{t-M-j+i}\,.
\end{equation}
As before the retailer uses the order-up-to-level policy thus
the level of the inventory at time $t$ is given in eq. (\ref{st}). By the stationarity and independence of the sequences of demands and lead times one can show that $\widehat{\sigma}_t^2$ given in (\ref{sigmadl}) does not depend on $t$. Hence we obtain
\begin{eqnarray}
q_t&=& \widehat{D}_t^L-\widehat{D}_{t-1}^L+D_{t-1}\nonumber\\
&=&\frac{1}{n}\sum_{j=0}^{n-1} D_{t-M-j}^L-\frac{1}{n}\sum_{j=0}^{n-1} D_{t-1-M-j}^L +D_{t-1}\nonumber\\
&=& \frac{1}{n} D_{t-M}^L-\frac{1}{n}D_{t-M-n}^L+D_{t-1}\nonumber\\
&=&\frac{1}{n}\sum_{i=0}^{L_{t-M}-1}D_{t-M+i}-\frac{1}{n}\sum_{i=0}^{L_{t-M-n}-1}
D_{t-M-n+i}+D_{t-1}\,.\label{qtcal}
\end{eqnarray}
\begin{theorem}\label{thltdp1}
Under above assumptions and for $n\geq M$ the bullwhip effect measure is the following
$$
BM=\frac{\Var q_t}{\Var D_t}=
1+\frac{2p_M}{n}+\frac{2\mu_L}{n^2}+\frac{2\mu^2_D\sigma^2_L}{\sigma^2_D n^2}\,.
$$
\end{theorem}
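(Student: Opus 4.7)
My plan is to start from the explicit expression (\ref{qtcal}) for $q_t$ and compute $\Var q_t$ by direct variance-of-sum expansion. Writing
\[
A=\frac{1}{n}\sum_{i=0}^{L_{t-M}-1}D_{t-M+i},\qquad B=\frac{1}{n}\sum_{i=0}^{L_{t-M-n}-1}D_{t-M-n+i},\qquad C=D_{t-1},
\]
so that $q_t=A-B+C$, the decomposition
\[
\Var q_t=\Var A+\Var B+\Var C-2\Cov(A,B)+2\Cov(A,C)-2\Cov(B,C)
\]
reduces the problem to six easier quantities. The structure of the argument is then to show that two of the three covariances vanish, to evaluate $\Var A=\Var B$ by conditioning on the lead time, and to compute the surviving covariance $\Cov(A,C)$ carefully; the last of these is precisely what will produce the $p_M$ term in the bullwhip measure.

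For the vanishing covariances I would exploit the hypothesis $n\geq M$. The demand indices appearing in $A$ lie in $\{t-M,\dots,t-M+L_{t-M}-1\}\subseteq\{t-M,\dots,t-1\}$, while those in $B$ lie in $\{t-M-n,\dots,t-n-1\}$; the condition $n\geq M$ makes these index sets disjoint. Combined with the mutual independence of the $D_i$'s and the $L_i$'s, this forces $\Cov(A,B)=0$ and, since $D_{t-1}$ also falls outside the index range of $B$, $\Cov(B,D_{t-1})=0$. For $\Var A$ I would apply the law of total variance conditional on $L_{t-M}$, using that given $L_{t-M}=k$ the inner sum is a sum of $k$ iid variables with mean $\mu_D$ and variance $\sigma_D^2$. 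The two conditional pieces yield $\Exp[L_{t-M}]\sigma_D^2=\mu_L\sigma_D^2$ and $\Var(L_{t-M})\mu_D^2=\sigma_L^2\mu_D^2$, giving $\Var A=\Var B=(\mu_L\sigma_D^2+\sigma_L^2\mu_D^2)/n^2$ in one line.

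The main obstacle is the cross term $\Cov(A,D_{t-1})$, which is the only place where the randomness of the lead time couples nontrivially to the current demand (and where the calculation genuinely departs from the deterministic-lead-time proof of Proposition \ref{ldet}). The key observation is that $D_{t-1}$ enters the sum defining $A$ if and only if the top index $t-M+L_{t-M}-1$ equals $t-1$, which under $L_{t-M}\leq M$ occurs precisely on the event $\{L_{t-M}=M\}$, of probability $p_M$. Conditioning on $L_{t-M}$ and using the independence of the lead times from the demand sequence, the cross moment $\Exp[A\,D_{t-1}]$ evaluates to $(\mu_L\mu_D^2+p_M\sigma_D^2)/n$, while $\Exp A\cdot\Exp D_{t-1}=\mu_L\mu_D^2/n$, so $\Cov(A,D_{t-1})=p_M\sigma_D^2/n$. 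Assembling $\Var q_t=2(\mu_L\sigma_D^2+\sigma_L^2\mu_D^2)/n^2+\sigma_D^2+2p_M\sigma_D^2/n$ and dividing by $\Var D_t=\sigma_D^2$ delivers the claimed expression for $BM$. The only delicate step is the indicator argument pinpointing when $D_{t-1}$ falls inside the random index window of $A$; everything else is bookkeeping.
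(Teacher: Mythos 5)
Your proof is correct, and it reaches the paper's formula by a somewhat different decomposition. The paper applies the law of total variance to $q_t$ as a whole, conditioning on the pair $(L_{t-M},L_{t-M-n})$: the term $2\mu_D^2\sigma_L^2/n^2$ comes from the variance of the conditional mean, and the $p_M$ term emerges from a case split inside the conditional variance (the coefficient of $D_{t-1}$ jumps from $1$ to $1+1/n$ exactly on the event $\{L_{t-M}=M\}$, contributing an extra $2\sigma_D^2/n$ there). You instead expand $\Var(A-B+C)$ bilinearly, kill $\Cov(A,B)$ and $\Cov(B,C)$ by the index-disjointness that $n\geq M$ guarantees, evaluate $\Var A=\Var B=(\mu_L\sigma_D^2+\sigma_L^2\mu_D^2)/n^2$ by conditioning on a single lead time, and isolate the $p_M$ contribution as the explicit covariance $\Cov(A,D_{t-1})=p_M\sigma_D^2/n$. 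The two arguments hinge on the same key observation --- $D_{t-1}$ falls inside the random index window of the first lead-time-demand block precisely when $L_{t-M}=M$ --- but your version makes the origin of the $O(1/n)$ term more transparent (it is literally the correlation between the forecast-update term and the most recent demand), at the cost of having to verify three covariances rather than one conditional variance. All of your intermediate quantities agree with the paper's, and the assembled $\Var q_t=\sigma_D^2+2p_M\sigma_D^2/n+2(\mu_L\sigma_D^2+\sigma_L^2\mu_D^2)/n^2$ matches the stated $BM$.
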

\begin{proof}
Using the law of total variance we have
$$
\Var q_t=\Exp(\Var(q_t|L_{t-M}, L_{t-M-n}))+\Var\Exp(q_t|L_{t-M}, L_{t-M-n})\,.
$$
By eq. (\ref{qtcal}) we get
$$
\Exp(q_t|L_{t-M}, L_{t-M-n})=\mu_D\left(\frac{L_{t-M}-L_{t-M-n}}{n}+1\right)\,.
$$
Thus
\begin{equation}\label{varexp}
\Var\Exp(q_t|L_{t-M}, L_{t-M-n})=\frac{2\mu^2_D\sigma^2_L}{n^2}\,.
\end{equation}
We need to consider two case to find $\Var(q_t|L_{t-M}, L_{t-M-n})$.
In the first case $L_{t-M}<M$ we get
$$
\Var(q_t|L_{t-M}, L_{t-M-n})=\sigma^2_D\left(\frac{L_{t-M}+L_{t-M-n}}{n^2}+1\right)\,.
$$
If $L_{t-M}=M$ we have
$$
\Var(q_t|L_{t-M}, L_{t-M-n})=\sigma^2_D\left(\frac{L_{t-M}+L_{t-M-n}}{n^2}+1+\frac{2}{n}\right)\,.
$$
Finally we obtain
$$
\Var(q_t|L_{t-M}, L_{t-M-n})=\sigma^2_D\left(\frac{L_{t-M}+L_{t-M-n}}{n^2}+1\right)+
\frac{2\sigma^2_D}{n}\ind\{L_{t-M}=M\}
$$
where $\ind$ is the indicator function. Thus we get
$$
\Exp\Var(q_t|L_{t-M}, L_{t-M-n})=\sigma^2_D\left(\frac{2\mu_L}{n^2}+1\right)+\frac{2\sigma^2_D p_M}{n}
$$
which together with eq. (\ref{varexp}) give the assertion.
\end{proof}

The formula for the bullwhip effect measure in the case $n<M$ is more complicated and its derivation is rather cumbersome. In practice the case $n\geq M$ is more interesting because we require to put large $n$ in the forecast to get a more precise prediction. Let us notice that if $L_t=M=L$ is deterministic the formula of Th. \ref{thltdp1} is consistent with Prop. \ref{ldet}.

We have to mention that in the formula of Th. \ref{thltdp1} the term $\frac{2p_M}{n}$ gives the largest contribution in the bullwhip effect  for large $n$ because it is of the order $O(1/n)$. This means that in reducing the bullwhip effect the probability of the largest lead time is very important. It is astonishing that if $p_M=0$ and we still get back $M$ periods in the prediction of lead time demands then the bullwhip effect measure is reduced
by the therm $O(1/n)$ and is of the form
$$
\frac{\Var q_t}{\Var D_t}=
1+\frac{2\mu_L}{n^2}+\frac{2\mu^2_D\sigma^2_L}{\sigma^2_D n^2}\,.
$$
Let us compare the values of the bullwhip effect measure under $p_M>0$ and $p_M=0$. More precisely let $L_t$ have the discrete uniform distribution on $\{1,2,3\}$ that is 
$p_k=1/3$ for $k=1,2,3$ then $M=3$, $\mu_L=2$ and $\sigma_L^2=2/3$. In the case $p_M=0$ we assume that $L_t$ has the discrete uniform distribution on $\{1,2\}$ that is 
$p_k=1/2$ for $k=1,2$ then $\mu_L=1.5$ and $\sigma_L^2=1/4$ (and we still get back
at least $M=3$ periods to predict the lead time demand). The results are in Tab. \ref{bm1}.
\begin{table}[!h]
  \begin{center}
  \caption{The measure of the bullwhip effect as a function of $n$ for $M=3$ and $\sigma_D/\mu_D=0.5$.}\label{bm1}
\vspace{2mm}
   \begin{tabular}{|c|c|c|}
     \hline
$n$ & $p_M>0$ & $p_M=0$ 
\\ \hline
3&	2.259&	1.555\\
4&	1.750&	1.312\\
5&	1.506 &	1.200\\
6&	1.370&	1.138\\
7&	1.285& 1.102\\
8&	1.229& 1.078\\
9&	1.189&	1.061\\
10&	1.160&	1.050\\
11&	1.137&	1.041\\
12&	1.120&	1.034\\
13&	1.106&	1.029\\
14&	1.095	& 1.025\\
15&	1.085& 	1.022\\
\hline
\end{tabular}
\end{center}
\end{table}
Similarly we can calculate the bullwhip effect measure for longer lead times. More precisely let $L_t$ have the discrete uniform distribution on $\{1,2,\ldots,7\}$ that is 
$p_k=1/7$ for $k=1,2,\ldots, 7$ then $M=7$, $\mu_L=4$ and $\sigma_L^2=4$. In the case $p_M=0$ we assume that $L_t$ has the discrete uniform distribution on $\{1,2,\ldots,6\}$ that is 
$p_k=1/6$ for $k=1,2,\ldots, 6$ then $\mu_L=3.5$ and $\sigma_L^2=3.916$ (and we still get back
at least $M=7$ periods to predict the lead time demand). The results are in Tab. \ref{bm2}.
\begin{table}[!h]
  \begin{center}
  \caption{The measure of the bullwhip effect as a function of $n$ for $M=7$ and $\sigma_D/\mu_D=0.5$.}\label{bm2}
\vspace{2mm}
   \begin{tabular}{|c|c|c|}
     \hline
$n$ & $p_M>0$ & $p_M=0$ 
\\ \hline
7&	1.857	 & 1.782\\
8&	1.660	 & 1.598\\
9&	1.525	 & 1.473\\
10& 1.428 &  1.383\\
11&1.356   & 1.316\\
12&1.301 & 1.266\\
13&1.258 & 1.226\\
14&1.224 & 1.195\\
15&1.196 & 1.170\\
16&1.174 & 1.149\\
17&1.155 &1.132\\
18&1.139& 1.118\\
\hline
\end{tabular}
\end{center}
\end{table}

\subsection{Stochastic lead times without forecasting}
In the paper of Duc et al. \cite{du:lu:ki:08} stochastic lead times are investigated under the assumption
that they are independent and identically distributed. The simplest  two stage supply chain in analyzed with a first-order autoregressive AR(1) demand process and an extension to a mixed first-order autoregressive-moving average ARMA(1,1). More precisely
the demands from customers to the retailer
constitute the first order autoregressive-moving average AR(1) that is $\{D_t\}_{t=-\infty}^\infty$ is a stationary sequence of random variables which satisfy
\begin{equation}\label{ard} 
D_t=\mu+\rho D_{t-1}+\epsilon_t
\end{equation}
where $\mu>0$, $|\rho|<1$ and $\{\epsilon_t\}_{t=-\infty}^\infty$ is a sequence of independent   
identically distributed random variables such that $\Exp \epsilon_t=0$ and $\Var\epsilon_t=\sigma^2$.
It is easy to notice that $\Exp D_t=\mu_D=\frac{\mu}{1-\rho}$, $\Var D_t=\sigma^2_D=\frac{\sigma^2}{1-\rho^2}$ and the correlation coefficient $\Corr(D_t, D_{t+1})=\rho$

Moreover it is assumed that the demands are forecasted using the minimum-mean-squared-error forecasting method. If $\widehat{D}_{t+i}$ denotes the forecast for a demand for the period $t+i$ at the beginning of a period $t$ (that is after $i$ periods) then employing the minimum-mean-squared-error forecasting method we get
\begin{eqnarray}\label{msef}
\widehat{D}_{t+i}&=&\Exp (D_{t+i}|, D_{t-1}, D_{t-2},\dots)\nonumber\\
&=& \mu_D(1-\rho^{i+1})+\rho^{i+1}D_{t-1}\label{fd}
\end{eqnarray}
where $D_{t-j}$ $j=1,2,\ldots$ are demands which have been observed by the retailer till the
beginning of a period $t$. Then the lead time demand at the beginning of the period $t$ is defined by
 Duc et al. \cite{du:lu:ki:08}
as follows
$$
\widehat{D}_t^L=\sum_{i=0}^{L_t-1}\widehat{D}_{t+i}
$$
where $\widehat{D}_{t+i}$ is given in eq. (\ref{msef}). 
Let us notice that the above lead time demand forecast is not practically feasible because we do not know the value of $L_t$ and the beginning of the period $t$. Practically to place an order we have to forecast demands and lead times which means that in the above lead time demand forecast we need to substitute a lead time prediction $\widehat{L}_t$ instead of $L_t$. 
As in the previous model the retailer uses the order-up-to-level policy and the level of inventory $S_t$ is given in (\ref{st}). The variance of the forecast error for the lead time demand and the order quantity $q_t$ placed by the retailer
at the beginning of a period $t$ are defined in (\ref{sigmadl}) and (\ref{qt}), respectively. 
The main result of Duc et al. \cite{du:lu:ki:08} is the following.
\begin{theorem}\label{duc1}
Under the above assumptions with the minimum-mean-squared-error forecasting method the bullwhip effect measure is 
\begin{eqnarray*}
\lefteqn{BM=\frac{\Var q_t}{\Var D_t}}\\
&=&\frac{1}{(1-\rho)^2}[(1-\rho^2)[1-2\rho\Exp\rho^{L_t}]+2\rho^2\Exp\rho^{2 L_t}
-2\rho^3(\Exp\rho^{L_t})^2]+\frac{2\mu^2_D\sigma^2_L}{\sigma^2_D}\,.
\end{eqnarray*}
\end{theorem}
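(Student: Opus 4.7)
The plan is to reduce $q_t$ to an explicit linear combination of $D_{t-1}$ and $D_{t-2}$ with coefficients depending on $(L_t, L_{t-1})$, and then apply the law of total variance conditioning on the two lead times, analogously to the strategy used in the proof of Theorem~\ref{thltdp1}.

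First I would evaluate $\widehat{D}_t^L$ in closed form. Substituting the MMSE formula~(\ref{msef}) into $\widehat{D}_t^L=\sum_{i=0}^{L_t-1}\widehat{D}_{t+i}$ and summing the geometric series gives
$$\widehat{D}_t^L \;=\; L_t\,\mu_D \,+\, (D_{t-1}-\mu_D)\,\frac{\rho(1-\rho^{L_t})}{1-\rho}.$$
By stationarity, $\widehat{\sigma}_t$ is independent of $t$, so from~(\ref{qt}) we have $q_t=\widehat{D}_t^L-\widehat{D}_{t-1}^L+D_{t-1}$. Substituting the closed form and combining the coefficient of $D_{t-1}$ via the identity $1+\rho(1-\rho^{L_t})/(1-\rho)=(1-\rho^{L_t+1})/(1-\rho)$ yields
$$q_t \;=\; \mu_D(L_t-L_{t-1}+1) \,+\, (D_{t-1}-\mu_D)\,\frac{1-\rho^{L_t+1}}{1-\rho} \,-\, (D_{t-2}-\mu_D)\,\frac{\rho(1-\rho^{L_{t-1}})}{1-\rho}.$$

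Next I would apply $\Var q_t = \Exp\Var(q_t\mid L_t,L_{t-1}) + \Var\Exp(q_t\mid L_t,L_{t-1})$. Since the lead times are independent of the demand sequence, $\Exp(q_t\mid L_t,L_{t-1})=\mu_D(L_t-L_{t-1}+1)$, and independence of $L_t$ and $L_{t-1}$ with common variance $\sigma_L^2$ immediately gives $\Var\Exp(q_t\mid L_t,L_{t-1})=2\mu_D^2\sigma_L^2$, which upon division by $\sigma_D^2$ yields the last term of the stated formula. For the conditional variance, abbreviate $a_\ell=(1-\rho^{\ell+1})/(1-\rho)$ and $b_{\ell'}=\rho(1-\rho^{\ell'})/(1-\rho)$; then for AR(1) one has $\Var D_{t-1}=\Var D_{t-2}=\sigma_D^2$ and $\Cov(D_{t-1},D_{t-2})=\rho\sigma_D^2$, so
$$\Var(q_t\mid L_t=\ell,L_{t-1}=\ell') \;=\; \sigma_D^2\bigl(a_\ell^{\,2}+b_{\ell'}^{\,2}-2\rho\,a_\ell b_{\ell'}\bigr).$$
Taking expectations, using independence of $L_t$ and $L_{t-1}$ to split the cross term into $\Exp a_{L_t}\cdot \Exp b_{L_{t-1}}$, everything collapses onto the two moments $A=\Exp\rho^{L_t}$ and $B=\Exp\rho^{2L_t}$.

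The remaining work is algebraic simplification. One computes $\Exp a_{L_t}^2=(1-2\rho A+\rho^2 B)/(1-\rho)^2$, $\Exp b_{L_{t-1}}^2=\rho^2(1-2A+B)/(1-\rho)^2$, $\Exp a_{L_t}=(1-\rho A)/(1-\rho)$, $\Exp b_{L_{t-1}}=\rho(1-A)/(1-\rho)$, then expands $2\rho^2(1-\rho A)(1-A)=2\rho^2-2\rho^2 A-2\rho^3 A+2\rho^3 A^2$ and collects terms. The main obstacle, and the only place I expect to slip, is tracking the bookkeeping that produces the clean factorization $(1-\rho^2)(1-2\rho A)+2\rho^2 B-2\rho^3 A^2$; in particular the cancellation of the $-2\rho^2 A$ contributions and the appearance of $\rho^3$ (rather than $\rho^2$) in front of $A^2$ and in the $A(1-\rho^2)$ term both come from the AR(1) covariance factor $\rho$ in the cross term, so one must be careful not to drop or duplicate it. Once that cancellation is done, dividing by $\sigma_D^2$ and adding the two pieces of the variance decomposition reproduces the claimed identity.
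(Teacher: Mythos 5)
Your derivation is correct: the closed form $\widehat{D}_t^L=L_t\mu_D+(D_{t-1}-\mu_D)\rho(1-\rho^{L_t})/(1-\rho)$, the decomposition of $q_t$ into a linear combination of $D_{t-1}-\mu_D$ and $D_{t-2}-\mu_D$ with lead-time-dependent coefficients, and the law of total variance conditioned on $(L_t,L_{t-1})$ all check out, and the final algebra does collapse to $(1-\rho^2)(1-2\rho\Exp\rho^{L_t})+2\rho^2\Exp\rho^{2L_t}-2\rho^3(\Exp\rho^{L_t})^2$ over $(1-\rho)^2$ plus $2\mu_D^2\sigma_L^2/\sigma_D^2$. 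The paper itself only quotes this result from Duc et al.\ without proof, but your argument is the standard one and is the exact analogue of the paper's own proof of Theorem~\ref{thltdp1}, so nothing further is needed.
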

Duc et al. \cite{du:lu:ki:08} give numerical examples. They calculate the value of $BM$ for specific distributions of $L_t$ e.g. three-point distribution, geometric distribution,
Poisson and discrete uniform distribution. The plots of $BM$ as a function of the autoregressive coefficient $\rho$ for a fixed $\sigma_D/\mu_D$ are presented. It is interesting that the minimal value of $BM$ is attained for $\rho$ around $-0.6$ and $-0.7$. The maximal value of $BM$ is for $\rho$ around
$0.6$ or $1$.

The authors of \cite{du:lu:ki:08} extend the results for ARMA(1,1) demand processes (the mixed first-order autoregressive-moving average process). In this case, the structure of demands is defined as follows
$$
D_t=\mu+\rho D_{t-1}+\epsilon_t-\theta\epsilon_{t-1}
$$
where $\mu$, $\rho$ and $\epsilon_t$ are the same as in the case of AR(1) demand process and $|\theta|<1$. Then
under the same assumptions (the order-up-to-level inventory policy and the minimum-mean-squared-error forecasting method) the bullwhip effect measure is given.
\begin{theorem}\label{duc2}
Under ARMA(1,1) demand process with the minimum-mean-squared-error forecasting method the bullwhip effect measure is
\begin{eqnarray*}
\lefteqn{BM=\frac{\Var q_t}{\Var D_t}}\\
&=&\frac{(1-\rho^2)(1-\theta)[1-\theta+2(\theta-\rho)\Exp\rho^{L_t}]+2(\rho-\theta)^2[\Exp\rho^{2 L_t}-\rho(\Exp\rho^{L_t})^2]}{(1-\rho)^2 (1+\theta^2-2\rho\theta)}\\
&&\,\,\,\,\,\,\,\,\,\,\,\,\,\,\,\,\,\,\,\,\,\,\,\,\,\,\,\,\,\,\,\,\,\,\,\,\,\,
+\frac{2\mu^2_D\sigma^2_L}{\sigma^2_D}\,.
\end{eqnarray*}
\end{theorem}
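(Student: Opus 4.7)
The plan is to follow the same structural outline used for Theorem~\ref{duc1}, adapting the computations to the ARMA(1,1) autocovariance structure. First I would derive the minimum-mean-squared-error forecast of $D_{t+i}$ at the beginning of period $t$. Writing the one-step-ahead forecast as $\widehat{D}_t=\mu+\rho D_{t-1}-\theta \epsilon_{t-1}$ and iterating (using that $\Exp(\epsilon_{t+j}\mid D_{t-1},D_{t-2},\ldots)=0$ for $j\geq 0$) gives the clean closed form
\[
\widehat{D}_{t+i}=\mu_D(1-\rho^{i+1})+\rho^{i+1}D_{t-1}-\theta\rho^{i}\epsilon_{t-1},\qquad i\geq 0,
\]
which agrees with (\ref{msef}) when $\theta=0$.

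Next I would plug this into $\widehat{D}_t^L=\sum_{i=0}^{L_t-1}\widehat{D}_{t+i}$ and collect the geometric sums. Setting $G(L)=\tfrac{1-\rho^{L}}{1-\rho}$, one gets
\[
\widehat{D}_t^L=\mu_D L_t+G(L_t)\bigl[\rho(D_{t-1}-\mu_D)-\theta\epsilon_{t-1}\bigr].
\]
By stationarity of $(D_t,L_t)$, the quantity $\widehat{\sigma}_t$ in (\ref{sigmadl}) does not depend on $t$, so it cancels in the order equation, and (\ref{qt}) becomes
\[
q_t=\mu_D(L_t-L_{t-1})+G(L_t)\bigl[\rho(D_{t-1}-\mu_D)-\theta\epsilon_{t-1}\bigr]-G(L_{t-1})\bigl[\rho(D_{t-2}-\mu_D)-\theta\epsilon_{t-2}\bigr]+D_{t-1}.
\]

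With $q_t$ in hand I would apply the law of total variance conditioning on $(L_t,L_{t-1})$. The conditional mean contributes $\mu_D(L_t-L_{t-1})$ (all other terms have zero mean given the lead times, since the lead times are independent of the demand/innovation sequences), so that component yields the additive term $2\mu_D^2\sigma_L^2/\sigma_D^2$ in the final formula. For the conditional variance, I would expand the square and take expectations using the standard ARMA(1,1) moments: $\Var D_t=\sigma_D^2=\sigma^2(1+\theta^2-2\rho\theta)/(1-\rho^2)$, $\Cov(D_{t-1},D_{t-2})=\rho\sigma_D^2-\theta\sigma^2$, $\Cov(D_{t-1},\epsilon_{t-1})=\sigma^2$, $\Cov(D_{t-1},\epsilon_{t-2})=(\rho-\theta)\sigma^2$, $\Cov(D_{t-2},\epsilon_{t-1})=0$, $\Var\epsilon_t=\sigma^2$, and independence across non-overlapping time indices.

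After taking expectation over $L_t$ and $L_{t-1}$, the cross terms will group according to whether they involve $\Exp\rho^{L_t}$ (single lead time) or $\Exp\rho^{L_t}\Exp\rho^{L_{t-1}}=(\Exp\rho^{L_t})^2$ (two independent lead times), and the diagonal $G(L_t)^2$ term will involve $\Exp\rho^{2L_t}$. I expect the coefficient of $(\Exp\rho^{L_t})^2$ to be $-2\rho(\rho-\theta)^2/[(1-\rho)^2(1+\theta^2-2\rho\theta)]$ and the coefficient of $\Exp\rho^{2L_t}$ to be $2(\rho-\theta)^2/[(1-\rho)^2(1+\theta^2-2\rho\theta)]$, while the terms independent of the lead-time geometric weights reduce to $(1-\rho^2)(1-\theta)^2/[(1-\rho)^2(1+\theta^2-2\rho\theta)]$, and the single-$\Exp\rho^{L_t}$ terms assemble into $2(1-\rho^2)(1-\theta)(\theta-\rho)/[(1-\rho)^2(1+\theta^2-2\rho\theta)]$. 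Combining gives exactly the stated expression.

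The main obstacle is bookkeeping: the cross-covariances $\Cov(D_{t-1},\epsilon_{t-2})$ and $\Cov(D_{t-1},D_{t-2})$ that are absent (or trivial) in the AR(1) case introduce several extra contributions, and it is easy to mis-sign the $\theta$-dependent terms when multiplying through $G(L_t)\bigl[\rho(D_{t-1}-\mu_D)-\theta\epsilon_{t-1}\bigr]$. To minimize errors I would treat the demand-only, innovation-only, and cross $(D,\epsilon)$ parts of $q_t$ separately, compute each variance/covariance contribution in isolation, and only at the end factor out the common denominator $(1-\rho)^2(1+\theta^2-2\rho\theta)$. As a consistency check I would verify two limits: setting $\theta=0$ must recover Theorem~\ref{duc1}, and setting $L_t\equiv L$ deterministic must reproduce the known ARMA(1,1) bullwhip formula of Duc et al.~\cite{du:lu:ki:08} for constant lead time.
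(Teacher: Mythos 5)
Your plan is sound, but note first that the paper offers no proof of Theorem~\ref{duc2} to compare against: it is a survey, and this result is simply quoted from Duc et al.~\cite{du:lu:ki:08}. What you have written is therefore a reconstruction of the original derivation, and it is the right one. It also mirrors the one proof technique the paper does exhibit (for Theorem~\ref{thltdp1}): condition on the lead times, apply the law of total variance, and let the independence of $L_t$ and $L_{t-1}$ turn the cross term into $(\Exp\rho^{L_t})^2$ while the diagonal term produces $\Exp\rho^{2L_t}$.

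The individual ingredients you list all check out: the MMSE forecast $\widehat{D}_{t+i}=\mu_D(1-\rho^{i+1})+\rho^{i+1}D_{t-1}-\theta\rho^{i}\epsilon_{t-1}$ is correct (equivalently $G(L_t)\bigl[\rho(D_{t-1}-\mu_D)-\theta\epsilon_{t-1}\bigr]=G(L_t)(\widehat{D}_t-\mu_D)$ with $\Var(\widehat{D}_t-\mu_D)=\sigma^2(\rho-\theta)^2/(1-\rho^2)=\sigma_D^2-\sigma^2$, which is where the factor $(\rho-\theta)^2$ in the theorem comes from), and the covariances $\Cov(D_{t-1},D_{t-2})=\rho\sigma_D^2-\theta\sigma^2$, $\Cov(D_{t-1},\epsilon_{t-1})=\sigma^2$, $\Cov(D_{t-1},\epsilon_{t-2})=(\rho-\theta)\sigma^2$, $\Cov(D_{t-2},\epsilon_{t-1})=0$ are the standard ARMA(1,1) moments. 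Your claimed groupings are consistent with the stated formula, and the two limits you propose are the right sanity checks; one more worth adding is that formally setting $G\equiv 0$ and $\sigma_L=0$ (so $q_t=D_{t-1}$) collapses the first fraction to $1$, which the stated expression indeed does after simplification. What remains is only the bookkeeping you acknowledge; the proposal contains no gap in the argument itself, merely an unexecuted (but correctly specified) expansion.
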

Numerical results for the case of ARMA(1,1) demand
process provide the same trends as those of the AR(1) case.

\subsection{Stochastic lead times with forecasting}
In the work of Michna and Nielsen \cite{mi:ni:13} the impact of lead time forecasting on the bullwhip effect is investigated. It is assumed that lead times and demands are forecasted separately which seems to be a very natural and practical approach. More precisely the lead time demand prediction is the following 
\begin{equation}\label{eltd}
\widehat{D}_t^L=\widehat{L}_t\widehat{D}_t=
\frac{1}{mn}\sum_{i=1}^m L_{t-i}\sum_{i=1}^n D_{t-i}\,.
\end{equation}
where we use the moving average method for lead times and demands with the delay parameters 
$m$ and $n$, respectively. Moreover we assume that lead times and demands constitute  i.i.d. sequences being mutually independent. Under the same assumptions as in the previous models on the policy and the lead time demand forecast error it is proven the following result.
\begin{theorem}\label{bmmt}
The measure of the bullwhip effect in a two stage supply chain has the following form
$$
BM=\frac{\Var q_t}{\Var D_t}=
\frac{2\sigma_L^2(m+n-1)}{m^2n^2}+\frac{2\mu_D^2\sigma_L^2}{\sigma_D^2m^2}+\frac{2\mu_L^2}{n^2}+\frac{2\mu_L}{n}+1\,.
$$
\end{theorem}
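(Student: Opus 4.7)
The plan is to reduce everything to moments of the two sample means
$\bar L_t=\frac{1}{m}\sum_{i=1}^m L_{t-i}$ and $\bar D_t=\frac{1}{n}\sum_{i=1}^n D_{t-i}$ and exploit the mutual independence of the $L$ and $D$ sequences. First I would substitute (\ref{eltd}) into (\ref{st}) and use (\ref{qt}), noting that by stationarity and independence $\widehat{\sigma}_t$ is constant in $t$ and therefore cancels. This gives
$$
q_t=\bar L_t\bar D_t-\bar L_{t-1}\bar D_{t-1}+D_{t-1},
$$
so I can expand
$$
\Var q_t=2\Var(\bar L_t\bar D_t)-2\Cov(\bar L_t\bar D_t,\bar L_{t-1}\bar D_{t-1})+\sigma_D^2+2\Cov(\bar L_t\bar D_t-\bar L_{t-1}\bar D_{t-1},D_{t-1}),
$$
using stationarity to identify $\Var(\bar L_t\bar D_t)=\Var(\bar L_{t-1}\bar D_{t-1})$.

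Next I would evaluate each of the four building blocks by the standard identity for products of independent vectors: if $(X,X')$ and $(Y,Y')$ are independent pairs, then
$$
\Cov(XY,X'Y')=\mu_X\mu_{X'}\Cov(Y,Y')+\mu_Y\mu_{Y'}\Cov(X,X')+\Cov(X,X')\Cov(Y,Y').
$$
With $X,X'$ replaced by $\bar L_t,\bar L_{t-1}$ and $Y,Y'$ by $\bar D_t,\bar D_{t-1}$, everything reduces to computing $\Var\bar L_t=\sigma_L^2/m$, $\Var\bar D_t=\sigma_D^2/n$, and the lag-one covariances $\Cov(\bar L_t,\bar L_{t-1})=(m-1)\sigma_L^2/m^2$ and $\Cov(\bar D_t,\bar D_{t-1})=(n-1)\sigma_D^2/n^2$, the latter two obtained by counting the $m-1$ (resp.\ $n-1$) shared i.i.d.\ summands in the overlapping windows. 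Subtracting $\Cov(\bar L_t\bar D_t,\bar L_{t-1}\bar D_{t-1})$ from $\Var(\bar L_t\bar D_t)$ then collapses neatly to
$$
\frac{\mu_L^2\sigma_D^2}{n^2}+\frac{\mu_D^2\sigma_L^2}{m^2}+\frac{(m+n-1)\sigma_L^2\sigma_D^2}{m^2n^2},
$$
which accounts for the first three terms in the theorem.

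For the cross term, I would use that $\bar L_t,\bar L_{t-1}$ are independent of the demand sequence, so $\Cov(\bar L_t\bar D_t,D_{t-1})=\mu_L\Cov(\bar D_t,D_{t-1})=\mu_L\sigma_D^2/n$ (only the $D_{t-1}$ summand in $\bar D_t$ contributes), while $\Cov(\bar L_{t-1}\bar D_{t-1},D_{t-1})=0$ because $D_{t-1}$ does not appear in $\bar D_{t-1}$. Multiplied by $2$ this yields the $2\mu_L\sigma_D^2/n$ contribution, and adding the $\sigma_D^2$ term and dividing by $\sigma_D^2$ produces exactly the formula in the statement.

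The routine bookkeeping is not hard individually, but the main obstacle is the algebraic combination: the three separate contributions $\Var(\bar L_t\bar D_t)$, $\Cov(\bar L_t\bar D_t,\bar L_{t-1}\bar D_{t-1})$, and the cross covariance each contain mixed $\mu^2\sigma^2$ and $\sigma^2\sigma^2$ pieces of differing orders in $1/m,1/n$, and one must be careful that the $1/(mn)$ and $(m-1)(n-1)/(m^2n^2)$ terms combine to give the clean $(m+n-1)/(m^2n^2)$ coefficient rather than leftover cross terms. Once that simplification is verified the identification with the claimed $BM$ is immediate.
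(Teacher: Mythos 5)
Your derivation is correct and complete: the reduction to $q_t=\bar L_t\bar D_t-\bar L_{t-1}\bar D_{t-1}+D_{t-1}$, the product-covariance identity for independent pairs, the overlap counts $(m-1)\sigma_L^2/m^2$ and $(n-1)\sigma_D^2/n^2$, and the cross term $2\mu_L\sigma_D^2/n$ all check out and recombine to exactly the stated formula. Note that this paper does not reprove Theorem \ref{bmmt} but imports it from Michna and Nielsen \cite{mi:ni:13}; your argument is the natural one for that setting and is consistent with how the paper handles the analogous computation in Theorem \ref{thltdp1} (stationarity making $\widehat{\sigma}_t$ cancel, then direct variance bookkeeping), so there is nothing to flag.
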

The above theoretical model shows that one cannot avoid lead time forecasting when placing orders and the variance of orders will increase dramatically if a crude estimation of lead time (e.g. small $m$) or no estimation is used (e.g. assuming a constant lead time when placing orders).  Moreover the demand signal processing and the lead time signal processing which mean the practice of adjusting demand and lead time forecasts resulting in adjusting the parameters of the inventory replenishment rule are the main and equally important causes of the bullwhip effect.

To confirm theoretical results derived in Michna and Nielesen  \cite{mi:ni:13} we simulate the bullwhip effect measure in a supply chain which consists of three echelons.
First we assume that client demands are deterministic that is during a given period (this will be a week) we observe the same constant demand  $D$.
Above the consumers in our supply chain we have a retailer, a manufacturer and a supplier. Between the manufacturer and the retailer there are stochastic lead times which create an i.i.d. sequence (that is they are the delivery times of the manufacturer to the retailer). Similarly we observe random times between the supplier and the manufacturer and they constitute an i.i.d. sequence. These two sequences are mutually independent. Moreover we take the review period equal to
a week (7 days) and the lead times are uniform discrete random variables taking on values $1,2,\ldots,7$. The retailer uses the order-up-to level policy and the moving average method to predict
lead times with the delay parameter $m$ (consumer demands are constant equal to 5000 so they are not predicted by the retailer). Similarly the manufacturer places orders to its supplier that is he uses the order-up-to-level policy and the moving average method to predict
lead times with the delay parameter $m$ and the demands with the delay parameter $n$ (the demands of
the retailer are random by random lead times in his lead time demand forecast). In Tab. \ref{bmmdet} there are the simulation results that is the ratio of variances of the manufacturer and the retailer orders (variance of consumer demands is zero). A common feature of these simulation results is the fact that the delay parameter of demand forecasting $n$ smooths bullwhip much faster than the delay parameter of lead time forecasting $m$.

Under the same assumptions as above we simulate the bullwhip effect adding that customer demands are stochastic and they are i.i.d. with uniform distribution on $(4500, 5500)$ and independent of lead times. In Tab. \ref{bmmsto1} the bullwhip effect at the retailer stage is given that is the quotient of the 
retailer variance and the customer demand variance. Tab. \ref{bmmsto2} shows the same as in Tab.
\ref{bmmsto1} but calculated theoretically using the formula of Th. \ref{bmmt}.
Here we get a reverse behavior than in the case of deterministic demands that is the delay parameter of lead time forecasting $m$ smooths bullwhip much faster than the delay parameter of demand forecasting $n$
(see Tab. \ref{bmmsto1} and   Tab. \ref{bmmsto2}). 
Finally in Tab. \ref{bmmsto3} we have the bullhwip effect measure at the manufacturer stage 
that  is the ratio of the manufacturer order variance and the customer demand variance (we could count  the quotient of the manufacturer order variance and the retailer order variance but it is easy to get this having also the ratio of the 
retailer variance and the customer demand variance).
The simulation results for the bullwhip effect at the manufacturer stage 
show that the delay parameter of demand forecasting $n$ and the delay parameter of lead time forecasting $m$ dampen the effect with a similar strength.
\begin{table}[!h]
  \begin{center}
  \caption{The bullwhip effect measure for discrete uniform lead times and constant customer demands}\label{bmmdet}
\vspace{2mm}
 \begin{tabular}{|c|c|c|c|c|c|}
     \hline
$m \backslash n$  & 1 &2  & 6 & 10 & 20
\\ \hline
1 & 61.6592 & 17.9880 & 4.4984 & 3.3532 & 2.4048\\
3 & 39.1578&14.5778 & 4.3510& 3.1641 & 2.4692\\
6 & 44.2991&14.4909 & 5.4784 &3.0812 & 2.4720 \\
10 & 42.9382 &13.8638 & 4.3274& 3.6823 & 2.4074\\
15 &42.4075 &14.5218 & 4.0920& 3.1734 & 2.5155\\
20 &43.292 & 14.194 & 4.150& 3.165 & 2.744\\
\hline
\end{tabular}
\end{center}
\end{table}

\begin{table}[!h]
  \begin{center}
  \caption{The bullwhip effect measure at the retailer stage for discrete uniform lead times and uniformly distributed customer demands}\label{bmmsto1}
\vspace{2mm}
\begin{tabular}{|c|c|c|c|c|c|}
     \hline
$m \backslash n$  & 1 &2  & 6 & 10 & 20
\\ \hline
1 & 2506.6 &2336.7  &  2392.4&2380.9 &2549.4 \\
3 &310.13 &  280.31& 269.90 &267.19  & 265.30\\
6 &107.60 &78.70  & 68.65 &  71.21& 68.86\\
10 & 67.313 & 37.270 & 26.465 & 25.997 &25.889 \\
20 & 46.8218 &19.3528  &  9.2602& 8.1362 & 7.4563\\
\hline
\end{tabular}
\end{center}
\end{table}

\begin{table}[!h]
  \begin{center}
  \caption{The bullwhip effect measure at the retailer stage for discrete uniform lead times and uniformly distributed customer demands calculated theoretically}\label{bmmsto2}
\vspace{2mm}
\begin{tabular}{|c|c|c|c|c|c|}
     \hline
$m \backslash n$  & 1 &2  & 6 & 10 & 20
\\ \hline
1 & 2449.0 & 2417.0  & 2404.6 & 2402.9& 2401.9\\
3 & 310.33& 280.55 &  270.08& 268.89 &268.19 \\
6 & 109.00& 80.055 & 69.956 & 68.820 &68.160 \\
10 & 65.800&  37.220&  27.255&26.135  & 25.485\\
20 &47.400  & 19.105 & 9.2361 & 8.1258 & 7.4820\\
\hline
\end{tabular}
\end{center}
\end{table}

\begin{table}[!h]
  \begin{center}
  \caption{The bullwhip effect measure at the manufacturer stage for discrete uniform lead times and uniformly distributed customer demands}\label{bmmsto3}
\vspace{2mm}
\begin{tabular}{|c|c|c|c|c|c|}
     \hline
$m \backslash n$  & 1 &2  & 6 & 10 & 20
\\ \hline
1 &194700 &41742& 10720 & 8024.0 & 5891.9\\
3 &  13495& 4221.1 &  1191.6& 856.245 & 676.579\\
6 & 5821.1& 1216.1 & 364.729 & 226.268 &  171.459\\
10 &3671.5 & 581.341 & 112.412 & 93.529 &62.333 \\
20 & 2840.2 & 316.308 & 37.244 & 23.710 & 18.972\\
\hline
\end{tabular}
\end{center}
\end{table}

\section{Conclusions and future research opportunities}
The main conclusion from our research is that stochastic lead times boost the bullwhip effect. More precisely we deduce from the presented models that the effect is amplified by the increase of the expected value, variance and the probability of the largest lead time. Moreover the delay parameter of the prediction
of demands, the delay parameter of the prediction of lead times and the delay parameter of the prediction of lead time demands depending on the model are crucial parameters which can dampen the bullwhip effect. We must also notice that in all the presented models the bullwhip effect
measure contains the term  $\frac{2\mu_D^2\sigma_L^2}{\sigma_D^2}$ (see Th. \ref{thltdp1}, \ref{duc1}, \ref{duc2} and \ref{bmmt}) and except the model of Duc et al. \cite{du:lu:ki:08} this
term can be killed by the prediction (going with $n$ or $m$ to $\infty$).

The future research on quantifying the bullwhip effect has to be aimed  at stochastic lead times with a different structure than i.i.d. and dependence between lead times and demands. One can investigate for example AR(1) structure of lead times and the influence of the dependence between lead times and demands on the bullwhip effect.
Another challenge in bullwhip modeling is the problem of lead time forecasting and its impact on the bullwhip effect. A member of a supply chain placing an order must forecast lead time to determine an appropriate inventory level in order to fulfill its customer orders in a timely manner which implies that lead times influence orders. In turn orders can impact lead times. This feedback loop can be the most important factor causing the bullwhip effect which has to be quantified
and in our opinion this seems to be the most important challenge and the most difficult problem in bullwhip modeling. 
Another topic is the combination of methods for lead time forecasting and demand forecasting (to predict lead time demand). Thus the spectrum of models which have to be investigated in order to quantify and find all causes of the bullwhip effect is very wide. However, these problems do not seem to be easy to solve by providing analytical models alone.

\subsection*{Acknowledgments} This work has been supported by the National Science Centre grant \\ 2012/07/B//HS4/00702.

\end{document}